\documentclass[twocolumn,aps,pra,notitlepage,10pt]{revtex4-2}
\usepackage{amsfonts,amssymb,amsmath}            
\usepackage{lmodern}
\usepackage[]{graphics,graphicx}            
\usepackage{amsthm}
\usepackage{bbold,enumitem,mathtools}
\usepackage[caption=false]{subfig}
\usepackage{tikz}
\usetikzlibrary{decorations.pathmorphing,patterns,decorations.markings,matrix,quantikz}

\usepackage{hyperref}

\usepackage[english]{babel}

\makeatletter
\def\bbl@set@language#1{%
  \edef\languagename{%
    \ifnum\escapechar=\expandafter`\string#1\@empty
    \else\string#1\@empty\fi}%
  \@ifundefined{babel@language@alias@\languagename}{}{%
    \edef\languagename{\@nameuse{babel@language@alias@\languagename}}%
  }%
  \select@language{\languagename}%
  \expandafter\ifx\csname date\languagename\endcsname\relax\else
    \if@filesw
      \protected@write\@auxout{}{\string\select@language{\languagename}}%
      \bbl@for\bbl@tempa\BabelContentsFiles{%
        \addtocontents{\bbl@tempa}{\xstring\select@language{\languagename}}}%
      \bbl@usehooks{write}{}%
    \fi
  \fi}
\newcommand{\DeclareLanguageAlias}[2]{%
  \global\@namedef{babel@language@alias@#1}{#2}%
}
\makeatother

\DeclareLanguageAlias{en}{english}

\theoremstyle{definition}

\newtheorem{lemma}{Lemma}
\theoremstyle{definition}

\DeclareMathAlphabet\mathbfcal{OMS}{cmsy}{b}{n}

\newcommand{\identity}{\mathbb{1}}

\begin{document}
\title{Noise reducing encoding strategies for spin chains}
\author{Catherine Keele}
\author{Alastair Kay}
\affiliation{Royal Holloway, University of London, Egham, Surrey TW20 0EX, England, United Kingdom}

\begin{abstract}
We present an encoding technique that reduces the effects of noise on quantum spin systems whose operation is driven by Hamiltonian evolution. This technique is widely applicable, being most relevant to the scenarios where there are insufficient qubits to permit full scale error correction. Instead, our technique can be implemented over small numbers of qubits and still leads to noticeable improvements in the fidelity of operations. The encoding scheme is easy to implement, flexible with respect to choice of Hamiltonian, and close to optimal.
\end{abstract}

\maketitle
\section{Introduction}
Quantum computers hold the promise of being able to solve problems that are too complex for classical computers \cite{shor1997}; however, fault tolerant quantum computation remains firmly in the future. To harness the power of quantum computation on near-term devices, there is currently a focus on Noisy Intermediate-Scale Quantum (NISQ) devices, which operate in the presence of noise but without access to full error-correction due to a limited number of qubits \cite{preskill2018}. Even in this setting, quantum supremacy has recently been demonstrated \cite{boixo2018}. Given the noise, but a lack of error correction, we must discover alternative strategies for error mitigation. Here, we introduce a simple technique that can mitigate against the effects of noise, improving the quality of operation in devices whose dynamics are driven by Hamiltonian evolution, using a small number of qubits relative to the number available.

The study of state transfer \cite{bose2003,christandl2004,bose2007,kay2010a} is a key test-bed for the development of ideas based upon Hamiltonian evolution. It provides a concrete task -- that of transferring an unknown quantum state from one qubit to another -- for us to study and demonstrates some of the key criteria for use in a NISQ device in that it is a useful elementary building block within more complex tasks and, in the Hamiltonian formulation, permits a factor of two speed enhancement compared to the circuit model \cite{yung2006} which,
while irrelevant to a computational scaling perspective, could be absolutely critical to achieving the maximum number of computational steps in a finite time before decoherence overwhelms the system.

The primary focus of early studies of state transfer was on perfect transfer \cite{christandl2004}. However, in an imperfect world, perfect transfer can never be achieved, and it is preferable to consider near-perfect transfer if, in trade, the system might be more tolerant of noise by, for example, achieving the transfer faster. Numerous methods have been considered for creating high quality transfer chains, from modifying some of the couplings in a chain \cite{apollaro2012} to encoding inputs and outputs \cite{osborne2004,haselgrove2005,keele2021}.

Nevertheless, these studies have generally focused on unitary evolution. Few studies have even quantified the effects of noise \cite{kay2010a,marais2013}, let alone attempted to directly improve the tolerance to noise. Those that have require unrealistic assumptions about the nature of the environment \cite{behzadi2018,burgarth2006a} or knowledge of when and where errors might happen \cite{marletto2012}.
Ultimately, these effects can be addressed by error correction \cite{burgarth2005,burgarth2005a,kay2017d,kay2016c} but we are interested in the regime for which error correction is not a realistic prospect because it requires too many qubits and too many operations.

Our approach is an encoding strategy that is lighter-touch than error correction. It generalises a method introduced by Haselgrove \cite{haselgrove2005}, in which an optimal encoding could be found for the task of quantum state transfer under unitary evolution. We further developed the technique in \cite{keele2021} for application to unitary evolution, but we now include the effects of noise. The method is broadly applicable to a wide range of Hamiltonians and noise types, and could readily be applied to tasks beyond that of state transfer. 

We show how two common types of noise -- amplitude damping and dephasing -- may be treated within our formalism, on an otherwise perfect system, and how fidelity in these types of system can be improved by use of encoding strategies. 
 
In section \ref{Setting}, we introduce the setting of our technique and the previous work by Haselgrove \cite{haselgrove2005} that we build upon here. Section \ref{Vectorisation} introduces our system and a measure of success.  Section \ref{Encoding} introduces our encoding scheme.  In Section \ref{Sec:SpecialCases} we incorporate the treatment of noise. We demonstrate the application of our encoding scheme to two example Hamiltonians -- one that implements perfect state transfer \cite{christandl2004,kay2010a}, and another specified in \cite{apollaro2012} which is the best chain that we know of in terms of the speed/fidelity trade-off. These are just expository and neither of these have been optimised for the scenario. We concentrate primarily on the single excitation subspace, but show how the results can be extended into the regime of multiple excitation subspaces in Section \ref{sec:multi}.
 
 \section{Setting}\label{Setting}

We consider a set of $N$ qubits, all prepared in the state $\ket{0}$. The task of state transfer requires the introduction of an unknown state $\ket{\psi}=\alpha\ket{0}+\beta\ket{1}$ in a given site, $a$, and for the evolution of the system Hamiltonian to cause this state to move to the output site, $b$. We assume that the Hamiltonian $H$ is excitation preserving, meaning $[H,\sum_{n=1}^{N}Z_{n}]=0$. As such, $\ket{\textbf{0}}:=\ket{0}^{\otimes N}$ is an eigenstate. (We will typically use $\ket{\textbf{0}}$ to denote the all-zero state of any number of qubits, where the number should be clear from context.) Thus, the only evolution that we have to focus on is that of the single excitation subspace, which is spanned by a basis $\ket{\textbf{n}} = \ket{0}^{\otimes(n-1)}\ket{1}\ket{0}^{\otimes(N-n)}$, within which the Hamiltonian is $H_1$. Without loss of generality, we take the input site to be labelled 1, and the output site $N$. Thus, perfect state transfer occurs in a system at time $t$ if $\left|e^{-iH_1t}_{N,1}\right|=1$. 

We will mostly focus on the single-excitation subspace. This makes the task computationally tractable, but is also motivated by \cite{keele2021}, in which it was shown that, for unitary evolution, the single excitation subspace is the optimal for encoding in across a broad parameter range. These encodings were capable of out-performing error correcting codes of the same size.

\subsection{Encoding}


Instead of controlling a single site at input and output, we assume control of two (small) sets of sites $\Lambda_{\text{in}}$ and $\Lambda_{\text{out}}$. We are able to prepare a single-excitation state $\ket{\psi}$ on the sites $\Lambda_{\text{in}}$,
$$
\ket{\Psi_{\text{in}}}=\alpha\ket{\textbf{0}}+\beta\ket{\psi}
$$
and receive it on the sites $\Lambda_{\text{out}}$. In the context of unitary evolution, the challenge of finding the optimal encoding and decoding strategy for a fixed $H_1$ and time $t$ was solved by Haselgrove \cite{haselgrove2005}. If
$$
P_{\text{in}}=\sum_{i\in\Lambda_{\text{in}}}\proj{\textbf{i}},\qquad P_{\text{out}}=\sum_{i\in\Lambda_{\text{out}}}\proj{\textbf{i}}
$$
are the projectors onto the input and output regions, then $P_{\text{in}}\ket{\psi}=\ket{\psi}$. We proceed by evaluating the $|\Lambda_{\text{out}}|\times|\Lambda_{\text{in}}|$ matrix
$$
\tilde U_t=P_{\text{out}}e^{-iH_{1} t}P_{\text{in}}.
$$
If we select an input state $\ket{\psi}$ to be a right singular vector of $\tilde U_t$, and $\ket{\phi}$ to be the corresponding left-singular vector, then by encoding in $\ket{\psi}$, and waiting a time $t$, the arriving state on the decoding region is $\ket{\phi}$ with a probability amplitude corresponding to the singular value. The maximum success probability is therefore just the square of the largest singular value of $\tilde U_{t}$.

We aim to generalise this encoding strategy to the context of noisy systems.

\section{Vectorisation}\label{Vectorisation}

In order to treat noise in a straightforward manner, we make use of a `vectorisation' procedure to the density matrix \cite{gilchrist2011} so that we have a density vector given as 
\begin{equation*}
    \ket{\rho} = \sum_{ij}\bra{ i}\rho\ket{ j}\ket{ ij}.
\end{equation*}
The advantage of this is that it converts noise super-operators into linear operators. As an example, for
$    \ket{\Psi_\text{in}} = \alpha\ket{\textbf{0}}+\beta\ket{\textbf{1}}$,
 the density matrix is $\rho = \proj{\Psi_\text{in}}$, and the density vector is
\begin{align*}
   \ket{\rho}    &= |\alpha|^2\ket{\textbf{00}}+\alpha^*\beta\ket{\textbf{10}}+\alpha\beta^*\ket{\textbf{01}}+|\beta|^2\ket{\textbf{11}} \\
   &=    \ket{\Psi_\text{in}\Psi^{*}_\text{in}}.
    \end{align*}
We need to know how operators on the density matrix $\rho$ manifest within this formalism.

\begin{lemma}\label{lem:vectorise}
Let $A$ and $B$ be two linear operators that act on a density matrix $\rho$. The vectorised form of $A\rho B$ is
\begin{equation*}
    \ket{ A\rho B} = A\otimes B^{T}\ket{\rho}.
\end{equation*}
\end{lemma}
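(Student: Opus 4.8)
The plan is to prove the identity by direct computation in a fixed orthonormal basis $\{\ket{i}\}$, exploiting that both sides are linear in $\rho$. Write $\rho_{kl}=\bra{k}\rho\ket{l}$, so that the definition of the vectorisation reads $\ket{\rho}=\sum_{kl}\rho_{kl}\ket{kl}$ with $\ket{kl}=\ket{k}\otimes\ket{l}$, and denote by $A_{ik}=\bra{i}A\ket{k}$ and $B_{lj}=\bra{l}B\ket{j}$ the matrix elements of the two operators. The whole argument then reduces to showing that the component expansions of the two sides coincide.

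First I would compute the left-hand side. Inserting two resolutions of the identity between the operators gives the matrix elements $\bra{i}A\rho B\ket{j}=\sum_{kl}A_{ik}\rho_{kl}B_{lj}$, so that by the definition of the vectorisation
$$
\ket{A\rho B}=\sum_{ij}\bra{i}A\rho B\ket{j}\,\ket{ij}=\sum_{ijkl}A_{ik}\rho_{kl}B_{lj}\,\ket{i}\otimes\ket{j}.
$$
Next I would compute the right-hand side by acting with $A\otimes B^{T}$ on each basis vector appearing in $\ket{\rho}$. Using $A\ket{k}=\sum_i A_{ik}\ket{i}$ and, crucially, $B^{T}\ket{l}=\sum_j B_{lj}\ket{j}$, one obtains $A\otimes B^{T}\ket{kl}=\sum_{ij}A_{ik}B_{lj}\,\ket{i}\otimes\ket{j}$, and hence
$$
A\otimes B^{T}\ket{\rho}=\sum_{ijkl}A_{ik}\rho_{kl}B_{lj}\,\ket{i}\otimes\ket{j}.
$$
This agrees with the previous display term by term, and since the equality holds for arbitrary $\rho$ the lemma follows.

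The only genuine subtlety — and the step I would double-check — is the appearance of the transpose rather than $B$ itself. It arises because the ket index $l$ of $B^{T}$ (its column label) must be contracted against the \emph{right} index of $\rho$, i.e.\ $(B^{T})_{jl}=B_{lj}$, whereas $B$ multiplies $\rho$ from the right. With the chosen convention $\ket{\rho}=\sum_{ij}\rho_{ij}\ket{ij}$ in which the first tensor factor carries the row index, this forces $B^{T}$, and not $B^{\dagger}$ or $B$, to act on the second factor; a different ordering of the vectorisation would merely relocate the transpose onto the first factor. Everything beyond this index bookkeeping is routine.
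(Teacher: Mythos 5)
Your proof is correct and follows essentially the same route as the paper's: expand $A\rho B$ in a fixed basis via completeness relations and observe that the index contraction on the right factor of $\rho$ produces $B^{T}$, so both sides have identical components. The only difference is presentational (you act $A\otimes B^{T}$ on the basis kets explicitly rather than regrouping the sum), which does not change the argument.
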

\begin{proof}

Applying the definition for vectorisation, we have 
\begin{equation*}
    \ket{ A\rho B} = \sum_{kl}\bra{ k} A\rho B\ket{ l} \ket{ k,l}
\end{equation*}
Using the completeness relation yields
\begin{align*}
    \ket{A\rho B} &= \sum_{ijkl}\bra{k} A\proj{i}\rho\proj{j} B\ket{l}\ket{ k,l}     \\
    &=\sum_{ijkl}\ket{k,l}\bra{k}A\ket{i}\bra{l}B^T\ket{j}\bra{i}\rho\ket{j} \\
    &  = \left(\sum_{kl}\ket{ k,l}\bra{ k,l} A\otimes B^{T}\right)\sum_{ij}\bra{ i}\rho\ket{ j}\ket{ i,j}. \\
    &= A\otimes B^{T}\ket{\rho}.
\end{align*}
\end{proof}

As our system can be divided into subspaces, we use the notation $\ket{\rho_{00}}$, $\ket{\rho_{01}}$, $\ket{\rho_{10}}$, $\ket{\rho_{11}}$ to indicate the component of $\ket{\rho}$ on a given subspace. The $00$ subspace is a single element, $\ket{\textbf{00}}$, while the 11 subspace is spanned by basis states $\ket{\textbf{ij}}$. The 01 and 10 subspaces are the coherences between these two, and are spanned by $\ket{\textbf{0i}}$ and $\ket{\textbf{i0}}$ respectively.

\subsection{Noisy Evolution}

We describe noise using the Lindblad master equation 
\begin{equation*}
    \frac{d\rho}{dt}=-i[H,\rho]+\sum_{n=1}^{N}(L_{n}\rho L_{n}^{\dagger}-\frac{1}{2}L^{\dagger}_{n}L_{n}\rho-\frac{1}{2}\rho L_{n}^{\dagger}L_{n}),
\end{equation*}
where the $\{L_n\}$ specify the noise. Under our vectorisation technique, we write
\begin{equation*}
    \mathbfcal{Q} = \mathbfcal{H}+\sum_n L_{n}\otimes L_{n}^{*}-\frac{1}{2}L_{n}^{\dagger}L_{n}\otimes\identity-\frac{1}{2}\identity\otimes (L_{n}^{\dagger}L_{n})^*
\end{equation*}
where $\mathbfcal{H} = -iH\otimes\identity+i\identity\otimes H^{T}$. The Lindblad equation then becomes
\begin{equation*}
    \frac{d\ket{\rho}}{dt}=\mathbfcal{Q}\ket{\rho},
\end{equation*}
such that the evolution is given by
\begin{equation*}
    \ket{\rho(t)} = e^{\mathbfcal{Q}t}\ket{\rho(0)}.
\end{equation*}

\subsection{Trace}

We will also need to take the partial trace over a set of sites $\bar\Lambda$, leaving just the set of qubits $\Lambda$ remaining. We define
$$
T_{\Lambda}\ket{\rho}=\ket{\text{Tr}_{\bar \Lambda}(\rho)}.
$$
This is also a linear operator -- if $\{\ket{u_i}\}$ is an orthonormal basis over the qubits $\bar\Lambda$,
$$
\text{Tr}_{\bar \Lambda}(\rho)=\sum_i(\identity_{\Lambda}\otimes\bra{u_i})\rho(\identity_{\Lambda}\otimes\ket{u_i})
$$
such that, by Lemma \ref{lem:vectorise},
$$
T_{\Lambda}=\sum_i(\identity_{\Lambda}\otimes\bra{u_i})\otimes(\identity_{\Lambda}\otimes\bra{u_i^*}).
$$
Typically, one picks the standard basis for performing the trace, in which case $\ket{u_i^*}=\ket{u_i}$.

\subsection{Quality of Transfer}\label{sec:quality}

If our aim is to successfully transfer a state from one location to another, we must introduce a measure of success. For Hamiltonian evolution, the measure of success is the transfer fidelity \cite{christandl2004}, whose derivation we reproduce here, using the vectorised notation, before later (Sections \ref{Encoding} and \ref{Sec:SpecialCases}) expanding it to include the noisy evolution and encoding/decoding.
 
The density matrix after evolution is given as
 \begin{equation*}
     \ket{\rho^{\prime}} = U_{t}\otimes U_{t}^*\ket{\rho},
 \end{equation*}
where $U_{t} = e^{-iHt}$.
The fidelity of state transfer is given by
$$
F=(\alpha^*\bra{0}+\beta^*\bra{1})\otimes(\alpha\bra{0}+\beta\bra{1})T_N\ket{\rho^{\prime}}
$$
for a specific input state $\alpha\ket{0}+\beta\ket{1}$. However, to truly judge the efficacy of the protocol, one should average over all possible input states, by identifying $\alpha=\cos\frac{\theta}{2}$ and $\beta=\sin\frac{\theta}{2}$ such that the average fidelity of state transfer is
\begin{align*}
    \overline{F}&= \frac{1}{4\pi}\int_{0}^{2\pi}\int_{0}^{\pi} F \sin{\theta}d\theta d\phi, \\
    &=\frac{1}{6}\left(3+2\sqrt{F_{ex}}+F_{ex}\right)
\end{align*}
where, for an excitation preserving Hamiltonian,
$$
F_{\text{ex}}=\bra{11}T_Ne^{\mathbfcal{H}_{1}t}\ket{\textbf{11}}=|\bra{\textbf{N}}e^{-iH_1 t}\ket{\textbf{1}}|^2
$$
and $\mathbfcal{H}_{1} = -iH_{1}\otimes\identity+i\identity\otimes H^{T}_{1}$. Although we want our transfer fidelity to be as high as possible, we note that there is a natural threshold of $\frac{2}{3}$ \cite{bose2003}, which is the fidelity achieved by classically transferring a quantum state, that we need to beat.

\section{Encoding Strategy}\label{Encoding}

We now introduce our encoding scheme, where the initial state is encoded over the 0 and 1 excitation subspaces of a set of qubits $\Lambda_{\text{in}}$. Our aim is to find the ideal choice of state $\ket{\psi}$ for the initial encoding of $\ket{\Psi_{\text{in}}}=\alpha\ket{\textbf{0}}+\beta\ket{\psi}$, and a decoding unitary $U$ acting on the decoding region $\Lambda_{\text{out}}$ such that the state $\alpha\ket{0}+\beta\ket{1}$ is reproduced on a single site with the maximum fidelity. We are assuming here that $U$ is performing the decoding onto a single qubit in the decoding region. In \cite{keele2021} we took a different approach of encoding onto a separate ancilla qubit. However, our motivation here is that we will be encoding over as many qubits as we can, so that precludes the possibility of making additional qubits interact \footnote{Ultimately this choice makes no difference until we look at higher excitation subspaces, but is included to demonstrate a different assumption.}.

The initial state evolves through time according to
\begin{equation*}
\ket{\rho^{\prime}} = e^{\mathbfcal{Q}t}\left(\alpha\ket{\textbf{0}}+\beta\ket{\psi}\right)\left(\alpha^{*}\ket{\textbf{0}}+\beta^{*}\ket{\psi^{*}}\right),
\end{equation*}
which we will then decode. Upon decoding, we trace out all other qubits because they are irrelevant. Hence, the transfer fidelity is
\begin{equation*}
    F = (\alpha^{*}\bra{ 0}+\beta^{*}\bra{ 1^{*}})(\alpha\bra{ 0}+\beta\bra{ 1})T_NU\otimes U^*T_{\Lambda_\text{out}}\ket{\rho^{\prime}}
\end{equation*}
Let $\mathbfcal{R}=T_NU\otimes U^{*}T_{\Lambda_\text{out}}e^{\mathbfcal{Q}t}$. As in Sec.\ \ref{sec:quality}, we average over all possible input states (parameters $\alpha,\beta$) to give
\begin{multline*}
    \overline{F} = \frac{1}{6}\left(2\bra{00}\mathbfcal{R}\ket{\textbf{00}}+2\bra{11}\mathbfcal{R}\ket{\psi\psi^*}+\right.\\
    +\bra{00}\mathbfcal{R}\ket{\psi\psi^*}+\bra{11}\mathbfcal{R}\ket{\textbf{00}}+ \\
    \left.+\bra{01}\mathbfcal{R}\ket{\textbf{0}\psi^*}+\bra{10}\mathbfcal{R}\ket{\psi\textbf{0}}\right).
\end{multline*}
Since $\mathbfcal{R}\ket{\psi\psi^*}$ describes a one-qubit density matrix, which has trace 1, we have that $\bra{00}\mathbfcal{R}\ket{\psi\psi^*}+\bra{11}\mathbfcal{R}\ket{\psi\psi^*}=1$. We make one further assumption about the noise model -- that it's excitation non-increasing. This means that $e^{\mathbfcal{Q} t}\ket{\textbf{00}}=\ket{\textbf{00}}$. From this, we infer the decoding unitary should map
$$
U\ket{\textbf{0}}_{\Lambda_\text{out}}=\ket{\textbf{0}}_{\Lambda_\text{out}}.
$$
Consequently, $\bra{00}\mathbfcal{R}\ket{\textbf{00}}=1$ and $\bra{11}\mathbfcal{R}\ket{\textbf{00}}=0$. Hence, under this assumption,
\begin{multline}\label{FidelityEq}
    \overline{F} = \frac{1}{6}\left(3+\bra{11}\mathbfcal{R}\ket{\psi\psi^*}+\right.\\
    \left.+\bra{01}\mathbfcal{R}\ket{\textbf{0}\psi^*}+\bra{10}\mathbfcal{R}\ket{\psi\textbf{0}}\right).
\end{multline}
We remain free to choose $\ket{\psi}$ and the action of $U$ on the single-excitation subspace, to maximise $\overline{F}$. We start by considering the components $\bra{10}\mathbfcal{R}\ket{\psi\textbf{0}}$ and $\bra{11}\mathbfcal{R}\ket{\psi\psi^*}$ separately.

We start with $\bra{11}\mathbfcal{R}\ket{\psi\psi^*}$. How are we to pick $U$? Note that if there is an excitation on the decoding region, that can only result from an excitation in the input (as the noise cannot introduce excitations). As such, we definitely want to provide a $\ket{1}$ state on the output if possible. Consequently, we impose that
\begin{equation*}
    U\ket{\textbf{n}} = \ket{ N,\phi_{n}},\forall n\in\Lambda_{\text{out}},
\end{equation*}
i.e.\ a single excitation on the output spin, and some arbitrary state over the other qubits of the decoding region, subject to the $\{\ket{\phi_{n}}\}$ forming an orthonormal basis over $\Lambda_{\text{out}\setminus N}$.

If we explicitly write out the state after evolution, the only terms remaining in the $(1,1)$ subspace are
\begin{equation*}
    e^{\mathbfcal{Q} t}\ket{\psi\psi^{*}} \rightarrow \sum_{n,m=1}^{N}\gamma_{nm}\ket{\textbf{nm}}.
\end{equation*}
Thus,
\begin{align*}
    \bra{11}\mathbfcal{R}\ket{\psi\psi^{*}} &= \bra{11}T_NU\otimes U^*\sum_{n,m\in\Lambda_{\text{out}}}\gamma_{nm}\ket{\textbf{nm}} \\
    &=\bra{11}T_N\sum_{n,m\in\Lambda_{\text{out}}}\gamma_{nm}\ket{N,\phi_n}\ket{N,\phi_m^*}.
\end{align*}
Using the orthonormal basis $\{\ket{\phi_n}\}$ for the trace $T_N$ leaves
$$
\bra{11}\mathbfcal{R}\ket{\psi\psi^{*}}=\sum_{n\in\Lambda_{\text{out}}}\gamma_{nn}.
$$
This allows us to see that our choice of $U$ is irrelevant (beyond our earlier very natural assumptions), for the  $\bra{ 11} \mathbfcal{R}\ket{\psi\psi^{*}}$ component. We need now only to find the input state $\ket{\psi}$ to maximise $\sum\gamma_{nn}$.

Let 
\begin{equation*}
    R = \sum_{i,j\in\Lambda_{\text{in}}}\sum_{m\in\Lambda_{\text{out}}}\ket{ j}\bra{ i}\bra{ \textbf{mm}} e^{\mathbfcal{Q}t}\ket{ \textbf{ij}}
\end{equation*}
such that $\bra{11}\mathbfcal{R}\ket{\psi\psi^{*}}=\sum\gamma_{nn} = \bra{\psi} R\ket{ \psi}$.
It follows that we are able to maximise the $\bra{11}\mathbfcal{R}\ket{\psi\psi^{*}}$ component by selecting $\ket{\psi}$ to be the eigenvector of $R$ with the maximum eigenvalue.

We now continue on, to understand how to independently maximise the other component in Eq.\ (\ref{FidelityEq}). It is sufficient to maximise only $\bra{ 10} \mathbfcal{R}\ket{\psi\textbf{0} }$
 as this can always be made real by incorporating a phase on $U$, such that $\bra{ 10} \mathbfcal{R}\ket{\psi\textbf{0} } =\bra{ 01} \mathbfcal{R}\ket{\textbf{0}\psi^{*} }, $ and both are real.

For an excitation non-increasing $\mathbfcal{Q}$, after evolution we can parameterise the term
\begin{equation*}
 e^{\mathbfcal{Q} t}\ket{\psi \textbf{0}}=\gamma_{00}\ket{\textbf{00}} + \sum_{n=1}^N\gamma_{n0}\ket{\textbf{n0}}.
\end{equation*}
One can readily calculate
\begin{multline}
U\otimes U^*T_{\Lambda_\text{out}} e^{\mathbfcal{Q} t}\ket{\psi \textbf{0}}=\\\gamma_{00}\ket{\textbf{00}} +\sum_{n\in\Lambda_{\text{out}}}\gamma_{n0}\ket{N,\phi_n}\ket{\textbf{0}}.
\end{multline}
The application of the final trace yields
\begin{multline}
\mathbfcal{R}\ket{\psi \textbf{0}}=\gamma_{00}\ket{00} +\sum_{n\in\Lambda_{\text{out}}}\gamma_{n0}\braket{\textbf{0}}{\phi_n}\ket{10}.
\end{multline}
The required overlap
\begin{equation*}
  \bra{ 10} \mathbfcal{R}\ket{\psi\boldsymbol{0}}= \sum_{n\in\Lambda_{\text{out}}}\gamma_{n0} \bra{\textbf{0}}\phi_{n}\rangle=\sum_{n\in\Lambda_{\text{out}}}\gamma_{n0}\bra{\textbf{N}}U\ket{\textbf{n}}
\end{equation*}
is maximised by setting $U^\dagger\ket{\textbf{N}}$ parallel to $\sum_n\gamma_{n0}\ket{\textbf{n}}$. One component of each of these states is fixed,
\begin{equation*}
    \braket{\textbf{0} }{ \phi_{n}} = \frac{\gamma_{n}^{*}}{\sqrt{\sum_{m\in\Lambda}|\gamma_{m}|^{2}}}
\end{equation*}
and leaves them otherwise free.
This yields an optimal value of the component
\begin{equation*}
  \bra{ 10} \mathbfcal{\mathbfcal{R}}\ket{\psi \boldsymbol{0}} = \sqrt{\sum_{m\in\Lambda}|\gamma_{m}|^{2}},
\end{equation*}
which can alternatively be expressed as 
\begin{equation*}
    \left\| \left(P_{\text{out}}\otimes\bra{\textbf{0}}\right) e^{\mathbfcal{Q}t}\left(\ket{\psi}\otimes\ket{\textbf{0}} \right)\right\|.
\end{equation*}
In the case of unitary evolution, this recovers the result of Haselgrove \cite{haselgrove2005}.
Therefore our encoding $\ket{\psi}$ will be the eigenvector corresponding to the largest right singular vector of 
\begin{equation}\label{eq:FinalQ}
   S=\left(P_{\text{out}}\otimes\bra{ \textbf{0}}\right) e^{\mathbfcal{Q}t}\left(P_{\text{in}}\otimes\ket{ \textbf{0}}\right).
\end{equation}
Instead of finding the maximum right singular vector of $S$, we can alternatively find the maximum eigenvector of $S^{\dagger}S$.

Note that, in optimising the two components separately, the conditions on the choice of $U$ are mutually compatible, it is only the choice of encoding $\ket{\psi}$ that could potentially differ. The overall expression for the fidelity is thus
\begin{equation}\label{eqn:fbar}
\bar{F}=\frac12+\frac13\sqrt{\bra{\psi}S^\dagger S\ket{\psi}}+\frac16\bra{\psi}R\ket{\psi}.
\end{equation}
An exactly optimal choice of $\ket{\psi}$ in all circumstances is non-trivial, but we will see in Sec. \ref{Sec:SpecialCases} that there are instances where this can be solved exactly. Furthermore, an extremely good approximation can be made in many reasonable cases. Equally, our intended operating regime for these encodings is with small sizes of encoding/decoding region, for which exact optimisation is possible.

\section{Special Cases}\label{Sec:SpecialCases}

We will now study some special cases in which we can find the optimal $\ket{\psi}$ and hence evaluate $\bar F$.

\subsection{Unitary Evolution}

We now derive the optimal encoding strategy in the noise-free case, which coincides exactly with Haselgrove's strategy \cite{haselgrove2005}. We consider, first, the $R$ term.

\begin{equation}\label{eq:Runitary}
    R = \sum\ket{ j}\bra{ i}\bra{\textbf{mm}} e^{(-iH_1\otimes\identity+i\identity\otimes H_1^{T})t}\ket{\textbf{i,j}}.
\end{equation}
We can rearrange this to get
\begin{align*}
    R &=\sum_{i,j,m}\ket{ j}\bra{i}\bra{\textbf{j}} e^{iH_1t}\ket{\textbf{m}}\bra{\textbf{m}} e^{-iH_1t}\ket{\textbf{i}} \\
    &=\sum_{i,j}\ket{ j}\bra{i}\bra{\textbf{j}} e^{iH_1t}P_{\text{out}} e^{-iH_1t}\ket{\textbf{i}} \\
    &=S^\dagger S.
\end{align*}
In this case, the $R$ term does not further constrain our choice of $\ket{\psi}$, and we are free to pick it to optimise the $S$ term via Eq.\ (\ref{eq:FinalQ}), thereby reproducing the strategy of Haselgrove. 

\subsection{Large Decoding Region}

Consider the case where $\Lambda_{\text{out}}$ comprises every qubit in the system. This is clearly not a realistic scenario, but is nevertheless interesting. We restrict the noise model to being excitation preserving, with the maximally mixed state (of each excitation subspace) being the fixed points of the map. One such example is dephasing noise. The state
$$
\sum_{m=1}^N\ket{\textbf{mm}}
$$
is an eigenstate of $\mathbfcal{Q}$ because it's the maximally mixed state, and thus the fixed point of the map. Hence, $R=\identity$. Again, the choice of $\ket{\psi}$ is irrelevant to this term, and we need only consider the term arising from $S$.

\subsection{Amplitude Damping Noise}\label{sec:ampdamp}

One important type of noise that we can treat within this formalism is amplitude damping noise. This describes the relaxation of a system as it loses energy to the environment. The Lindblad operators are $L_{i} = \sqrt{\frac{\Gamma_{x}}{2}}(X_{i}+iY_{i})$, where $\Gamma_{x} = \frac{1}{T_{1}}$ is a measure of the strength of the noise, $T_{1}$ being the longitudinal paramagnetic relaxation time \cite{rost2020}.

Due to the structure of the amplitude damping noise $\mathbfcal{Q}_{A}$, we can find analytical solutions for evolution constrained within the zero and single excitation subspaces by considering the evolution of each subspace of the density vector separately. We first look at evolution of the $\ket{ 11}$ subspace which is given by
\begin{equation*}
    \frac{d\ket{\rho_{11}}}{dt} = (-2\Gamma_{x}\identity+\mathbfcal{H}_{1})\ket{\rho_{11}}.
\end{equation*}
This has a direct solution
$$
\ket{\rho_{11}(t)}=e^{-2\Gamma_x t}e^{\mathbfcal{H}_{1}t}\ket{\rho_{11}(0)}
$$

Applying the same logic to the other subspaces of $\ket{\rho}$ tells us that the components evolve as
\begin{align}
    \ket{\rho_{00}}&\rightarrow\ket{\rho_{00}}, \label{eq:evolvep} \\
    \ket{\rho_{01}}&\rightarrow e^{-\Gamma_{x}t}e^{iH_{1}t}\ket{\rho_{01}},\\
    \ket{\rho_{10}}&\rightarrow e^{-\Gamma_{x}t}e^{-iH_{1}t}\ket{\rho_{10}},\\
    \ket{\rho_{11}}&\rightarrow e^{-2\Gamma_{x}t} e^{\mathbfcal{H}_{1}t}\ket{\rho_{11}}+(1-e^{-2\Gamma_{x}t})\ket{\rho_{00}}.\label{eq:evolvepend}
\end{align}
This result gives us an exact solution for the evolution of the density vector when subject to amplitude damping noise and shows that the effect of this noise is just to add the deterioration term $\Gamma_{x}$. We can see instantly that solutions can be taken directly from the noise-free case and our only opportunity to minimise noise is to make transfer as fast as possible.

\begin{figure*}
\subfloat[Uniformly coupled Hamiltonian]{\includegraphics[width=0.45\textwidth]{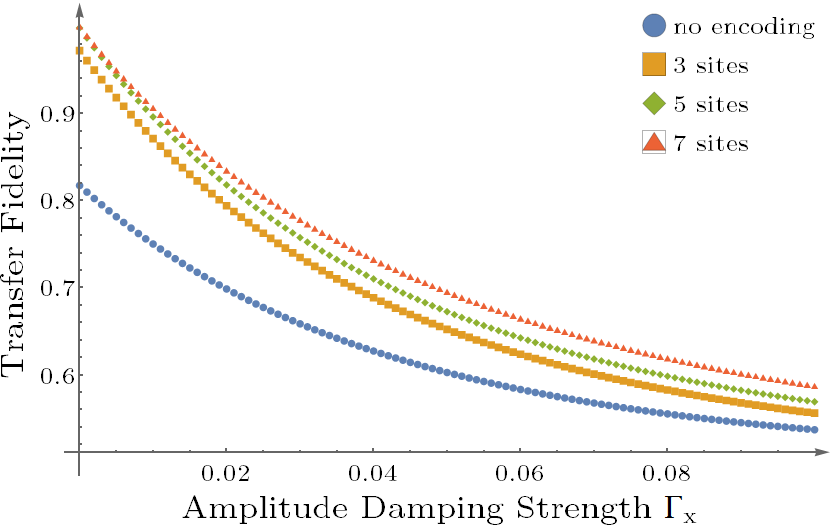}}
\subfloat[Perfect state transfer Hamiltonian \cite{christandl2004}]{\includegraphics[width=0.45\textwidth]{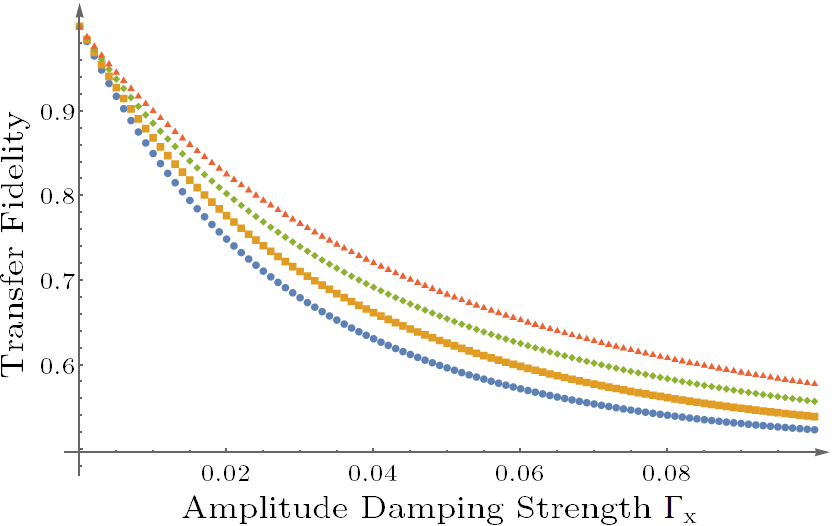}}
\caption{Performance of different Hamiltonians in the presence of amplitude damping noise, chain length 35, combatted with encoding of differing sizes.}\label{fig:ampdamp}
\end{figure*}

In this instance, we have that $S=e^{-\Gamma_xt}S_0$, where $S_0$ was the $S$ matrix in the absence of noise. Similarly,
\begin{align*}
    R &= e^{-2\Gamma_{x}t}\sum\ket{ j}\bra{ i}\bra{\textbf{mm}} e^{\mathbfcal{H}_1t}\ket{\textbf{ij}} \\
    &=e^{-2\Gamma_{x}t}\sum_{i,j}\ket{ j}\bra{i}\bra{\textbf{j}} e^{iH_1t}P_{\text{out}} e^{-iH_1t}\ket{\textbf{i}} \\
    &=S^\dagger S.
\end{align*}
Again, $R$ does not affect the choice of $\ket{\psi}$. Hence, the strength of the noise does not affect the choice of $\ket{\psi}$, so one can find the optimal encoding in the noiseless case (i.e.\ utilising Haselgrove's technique), and this is the optimal encoding for all noise strengths, just with reduced fidelity
$$
\bar F_{\Gamma_x}=\frac13+\frac16\left(1+e^{-\Gamma_xt}\left(\sqrt{6\bar F_{0}-2}-1\right)\right)^2.
$$
That said, when considering optimising over time as well, adding noise will tend to bring the optimal time (marginally) earlier.

In Fig.\ \ref{fig:ampdamp}, we plot the effects of amplitude damping noise (having optimised for time) for two different Hamiltonians. It is noteworthy that by the time we use an encoding/decoding region of size 7, the transfer fidelity has been significantly enhanced, and there is essentially no difference between the performance of the two Hamiltonians.

\subsection{Optimising over Components}

We have now seen a number of cases in which $R$ does not influence the optimal choice of $\ket{\psi}$, which is just selected to be the maximum eigenvector of $S^\dagger S$ as $R\propto\identity$ of $S^\dagger S$. In the case of general noise, we do not expect this to always hold, but we anticipate that $R$ will hold less relevance, and the $S$ term will dominate. This motivates our simplifying assumption that $\ket{\psi}$ will be close to being an eigenvector of $S^{\dagger}S$. In which case, we can approximate
\begin{equation}\label{eq:SSR}
    \overline{F}\approx\frac{1}{2}+\frac{1}{6}\bra{\psi} 2\sqrt{ S^{\dagger}S}+ R\ket{\psi}.
\end{equation}
Thus, $\ket{\psi}$ is just the maximum eigenvector of
\begin{equation*}\label{eq:Final}
    2\sqrt{S^{\dagger}S}+R.
\end{equation*}
Using the $\ket{\psi}$ in this way must represent a lower bound on the achievable fidelity. In contrast, the independent optimisation of the $R$ and $S$ terms yields an upper bound to this value. In Fig. \ref{upper}, we take the case of dephasing noise ($L_{i} = \sqrt{\frac{\Gamma_{z}}{2}}Z_{i}$), which is not expected to have $R\propto\identity$. For the maximum possible opportunity to see a discrepancy between the upper and lower bounds, we push the dephasing so strong as to render the transfer fidelities unusable. Even at this extreme, we see that the upper and lower bounds coincide, and thus anticipate that they will do so at all intermediate regimes. We thus expect this method to be essentially optimal across all relevant parameters.

\begin{figure}[t]
 \includegraphics[scale=0.22]{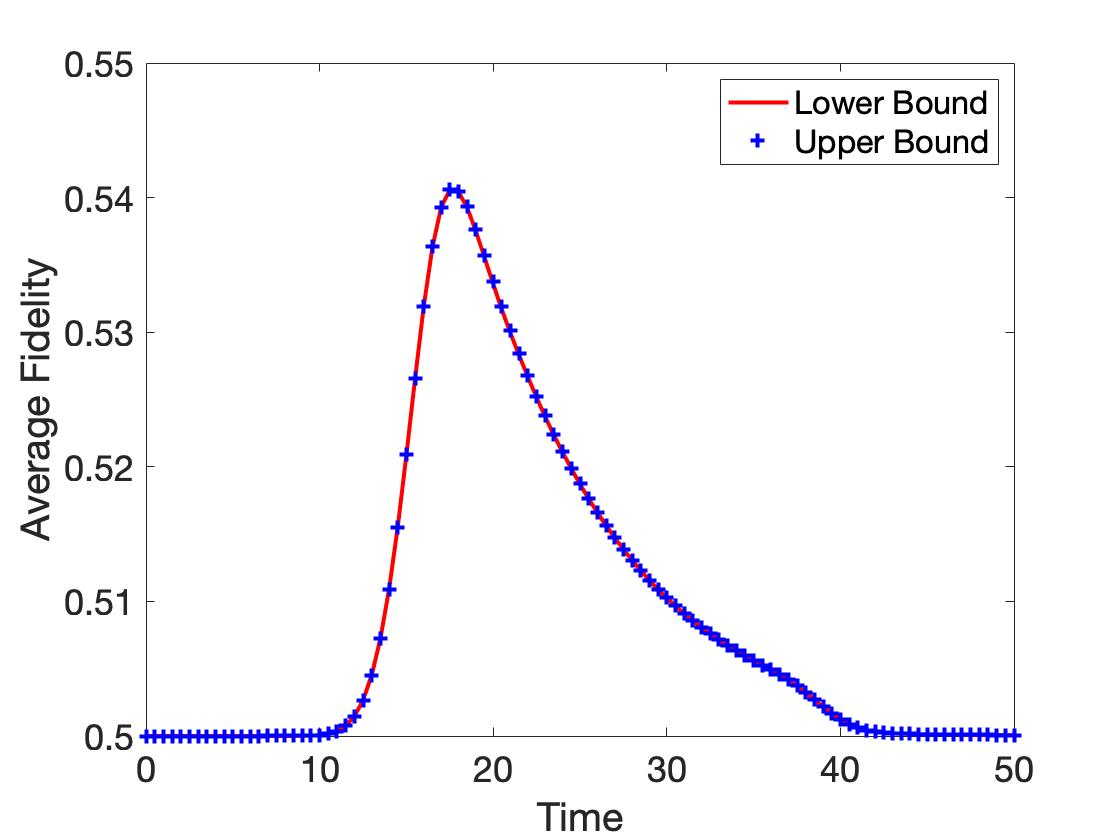}
 \caption{Comparison of upper and lower bound for average fidelity for a uniformly coupled chain of $N=35$ and dephasing strength of $\Gamma_{z} = 0.3$.}
 \label{upper}
\end{figure}

\section{Examples}

\begin{figure*}
\begin{tabular}{cc}
\subfloat[Uniform Hamiltonian with no encoding]{\includegraphics[width=0.45\linewidth]{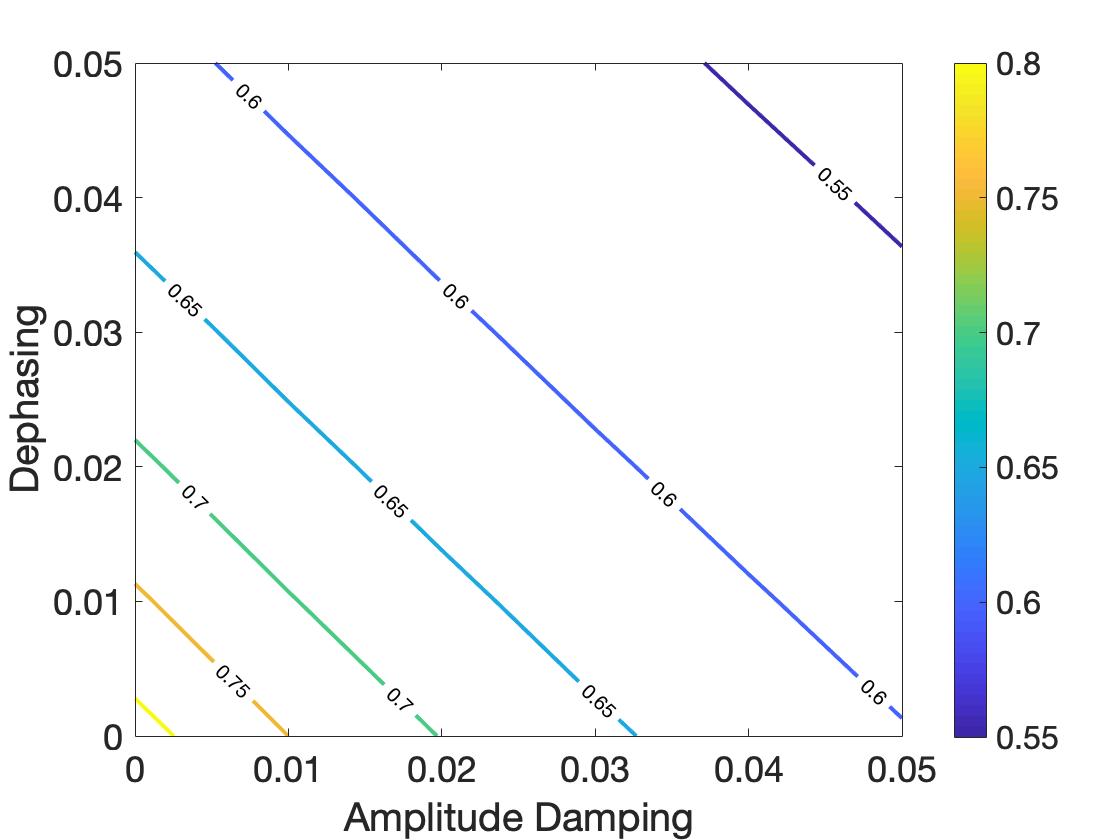}} &
\subfloat[Uniform Hamiltonian with 7-site encoding]{\includegraphics[width=0.45\linewidth]{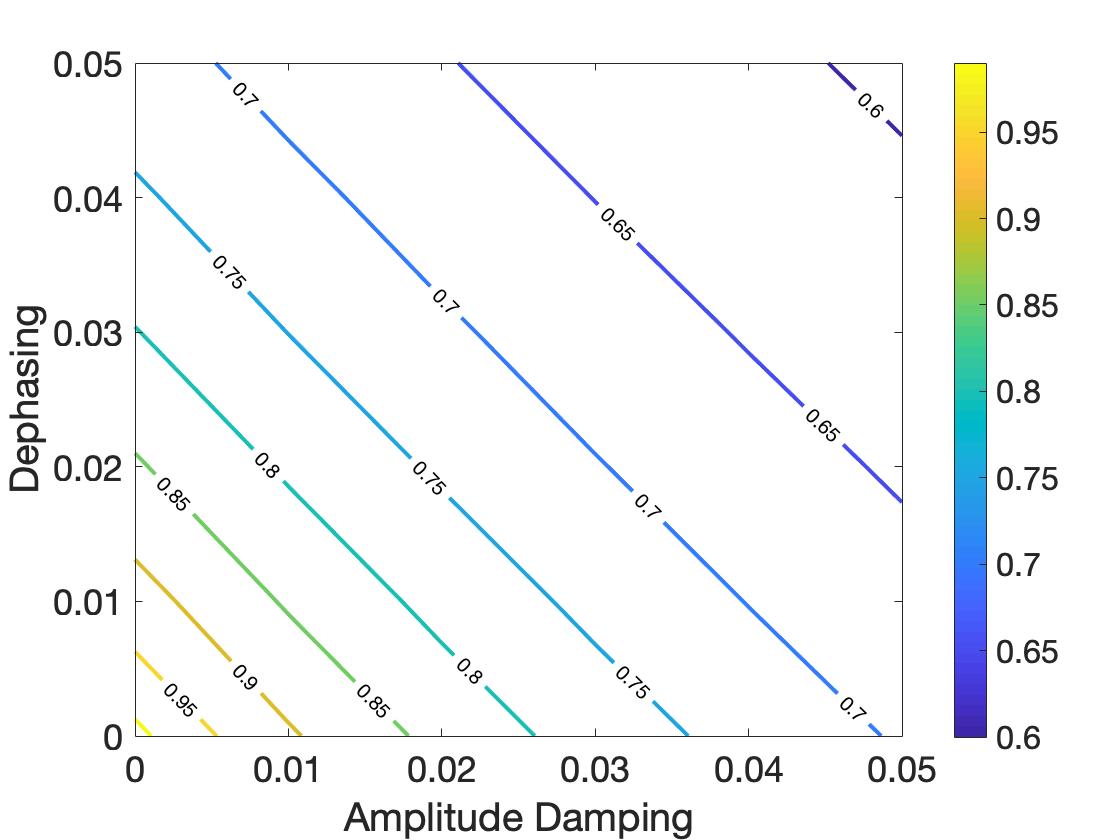}} \\
\end{tabular}
\caption{A comparison of average fidelity achieved for the uniformly coupled Hamiltonian with varying noise parameters for $N=35$.} 
        \label{fig:mean and std of nets}
    
    \end{figure*}

For the sake of concrete examples, we choose to perform transfer along the length of a chain coupled via a Hamiltonian
\begin{equation*}
    H = \frac{1}{2}\sum^{N-1}_{n=1}J_{n}(X_{n}X_{n+1}+Y_{n}Y_{n+1})-\frac12\sum^{N}_{n=1}B_{n}Z_{n}.
\end{equation*}
We impose that $J_i\leq 1$ for all $i$ in order to facilitate a fair comparison.
There are a wide variety of different coupling strengths that one could consider. We have not made a detailed study of performance of the many different options. However, the $e^{-\Gamma_x t}$ factor that arises in amplitude damping noise strongly suggests that we must find solutions that are as fast as possible. We therefore consider three cases -- the perfect state transfer chain \cite{christandl2004}, since it is the fastest such case \cite{yung2006,kay2016b}; the uniformly coupled chain \cite{bose2003}, and a tweaked version that is optimised for end-to-end transfer with high fidelity \cite{apollaro2012}.

We consider two noise models, amplitude damping, as introduced in Sec.\ \ref{sec:ampdamp}, and dephasing noise. For dephasing noise, the Lindblad operators are $L_{i} = \sqrt{\Gamma_{z}/2}Z_{i}$,
where $\Gamma_{z}$ is a measure of the strength of dephasing noise and $\Gamma_{z} = \frac{1}{T_{2}}$, where $T_{2}$ is the transverse paramagnetic relaxation time.

The results for the uniformly coupled chain are plotted in Fig.\ \ref{fig:mean and std of nets}, comparing the effects of dephasing and amplitude damping noise. These plots, and indeed all others for different Hamiltonians, consist of contours of constant fidelity that are straight lines. This is because the dominant noise term is of the form $e^{-(\Gamma_x+\Gamma_z)t}$ on a single excitation -- we already saw in Sec.\ \ref{sec:ampdamp} that this is the only contribution from amplitude damping. For dephasing noise, note that the noise terms are of the form
$$
\mathbfcal{Q}-\mathbfcal{H}=\frac{\Gamma_z}{2}\sum_nZ_n\otimes Z_n-\frac{\Gamma_z}{2}\identity\otimes\identity.
$$
This is diagonal. Of the $N^2$ diagonal elements in the single excitation subspace, $N^2-N$ of them are $-2\Gamma_z$, the exceptions being terms of the form $\ket{\textbf{n}\textbf{n}}$, which are 0. Informally, then, the dominant behaviour will be similar to $-2\Gamma_z\identity$, albeit with some correction due to the $\{\ket{\textbf{n}\textbf{n}}\}$, the most relevant component of which is that the exponential decay tends towards the maximally mixed state of the subspace, instead of leaking out of the subspace (as was the case for amplitude damping). It is this to which we ascribe the reason for a given fidelity contour being realised by a marginally higher value of $\Gamma_z$ than $\Gamma_x$.

Given that the dominant noise term is for the form $e^{-(\Gamma_x+\Gamma_z)t}$, it is clear that for a fixed transfer fidelity, we follow a contour of approximately constant $\Gamma_x+\Gamma_z$, which is what we observe. As such, it is not necessary to reproduce versions of Fig.\ \ref{fig:mean and std of nets} for multiple different encoding region sizes and Hamiltonian models. Instead, it is sufficient to refer back to Fig \ref{fig:ampdamp}, which provides the equivalent plots with $\Gamma_z=0$ fixed.
 
\section{Towards a Multiple-Excitation Encoding}\label{sec:multi}

So far, we have concentrated on encoding into the single-excitation subspace. This was partially motivated by the observation in \cite{keele2021} that the optimal encoding choice in the case of unitary evolution is in the single excitation subspace, and that this outperforms any usage of error correction. Nevertheless, it is certainly possible that using the higher excitation subspaces and error correcting codes could enhance the protection against noise -- loss of excitations from the decoding region is our source of error, so if we had several excitations and could tolerate the loss of all but one, this is going to achieve a higher fidelity (although, exactly the same expectation can be applied, erroneously, to the unitary evolution case).

In fact, the formalism developed so far is readily generalised, being aware of the discrepancy indicated in \cite{keele2021} for the case of decoding when multiple excitations are present, compared to what Haselgrove originally claimed to be optimal \cite{haselgrove2005}.

Consider a decoding region of size $|\Lambda_{\text{out}}|=M$ and let us encode using the excitation subspaces up to and including the $k^{th}$. We shall restrict to $k< M/2$. The purpose behind this assumption on the excitation number is that it lets us define $U$ much as before: $U\ket{\textbf{0}}=\ket{\textbf{0}}$ and
$$
U\ket{x}=\ket{N,\phi_x}
$$
for any $x\in\{0,1\}^M$ with weight $w_x\geq 1$ and $w_x\leq k$, where the $\ket{\phi_x}$ are orthonormal. Here, we have used $x$ to describe the basis state of the $M$ qubits in the decoding region. There are
$$
\sum_{n=1}^k\binom{M}{n}
$$
of these, and the states $\ket{\phi_x}$ only have support on $M-1$ qubits, so the maximum number must be $2^{M-1}$. Since $\sum_{n=1}^M\binom{M}{n}=2^M-1$, we need to pick $k$ to restrict to less than half the possible total sum, i.e.\ $k< M/2$. For larger weights, we cannot achieve the orthonormal condition \footnote{In \cite{keele2021}, we decoded onto a separate qubit, which removes this constraint.}, and a further approximation would be necessary. We should also note that the exact solution for amplitude damping is no longer applicable.

Repeating the previous calculations, the $\bra{11}\mathbfcal{R}\ket{\psi\psi^*}$ term becomes otherwise independent of the choice of $U$, and we recover a similar matrix to before,
\begin{equation}\label{eqn:Rmulti}
R=\sum_{x,y\in\{0,1\}^{|\Lambda_{\text{in}}|}}\sum_{\substack{z\in\{0,1\}^N\\w_z\leq k\\w_z^{\text{out}}\geq 1}}\ket{y}\bra{x}\bra{z,z}e^{\mathbfcal{Q}t}\ket{x\textbf0,y\textbf0}.
\end{equation}
Here, $w_z^{\text{out}}$ indicates the weight of the bit string restricted only to the components in $\Lambda_{\text{out}}$.
Similarly, optimisation of the $\bra{10}\mathbfcal{R}\ket{\psi\textbf{0}}$ term requires selection of the unitary such that
$$
\braket{\textbf{0}}{\phi_x}=\frac{\gamma_{x0,00}^*}{\sqrt{\displaystyle\sum_{z\in\{0,1\}^M}\left|\gamma_{z0,00}\right|^2}}
$$
where
$$
\gamma_{x,y}=\bra{x,y}e^{\mathbfcal{Q}t}\ket{\psi\textbf0}.
$$
This leads to a matrix
\begin{align*}
S&=\sum_{\substack{x\in\{0,1\}^M\\w_x\geq 1}}\ket{x}\bra{x\textbf{0},\textbf{00}}e^{\mathbfcal{Q}t}\sum_{\substack{z\in\{0,1\}^{|\Lambda_{\text{in}}|}\\w_z\geq 1}}\ket{z\textbf{0},\textbf{00}}\bra{z}. \\
&=(P_{\text{out}}\otimes\bra{\textbf{0}})e^{\mathbfcal{Q}t}(P_{\text{in}}\otimes\ket{\textbf{0}}).
\end{align*}
This time $P_{\text{in}}$ and $P_{\text{out}}$ are projectors onto the 1 to $k$ excitation subspaces on the input and output regions respectively, and onto all other qubits being in the $\ket{0}$ state.

For all the cases we have been considering, $S$ divides into a block-diagonal structure based on excitation number. For excitation preserving $Q$, this is trivial -- $\ket{z_\text{in}\textbf{0}}\ket{\textbf{0}}\mapsto \ket{x_{\text{out}}\textbf{0}}\ket{\textbf{0}}$ only if $x$ and $z$ have the same weight. For amplitude damping noise, the excitation-decreasing terms are due to terms of the form $(X+iY)\otimes(X+iY)$. This requires that we are able to remove an excitation from both halves of $\ket{\psi}\ket{\textbf{0}}$, which is clearly not possible -- they have no effect, and we revert to the excitation preserving case.

$R$ has an identical subspace structure. To see this, consider the term
$$
\bra{zz}e^{\mathbfcal{Q}t}\ket{x\textbf{0},y\textbf{0}}
$$
from Eq.\ (\ref{eqn:Rmulti}). From our previous discussion, we know that the Hamiltonian and noise will either preserve the number of excitations of $x$ and $y$, or decrease them by an equal number. But since they must both end up having the same excitation number ($w_z$), they must have started with the same excitation number. Note, however, that although both $S$ and $R$ are block diagonal, it does not necessarily mean that $\ket{\psi}$ is always supported on just one excitation subspace -- this is an effect of the square root in Eq.\ (\ref{eqn:fbar}).

\begin{figure}
\includegraphics[width=0.45\textwidth]{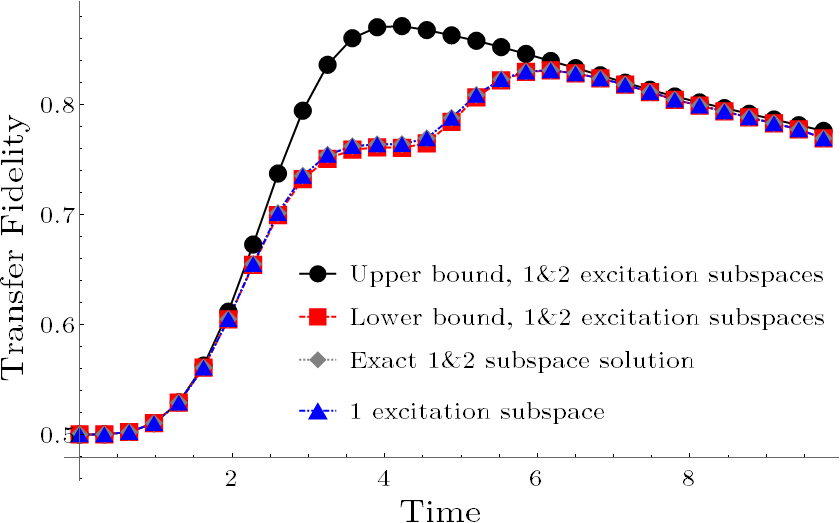}
        \caption{Average fidelity over time for a uniformly coupled chain of $N=13$. Encoding/decoding region of 5 qubits. We compare the upper bound for fidelity in the two excitation case with a lower bound and an exact solution. The exact solution case is hidden under the single excitation case. There is no advantage to using higher excitation subspaces. ($\gamma_x=0.02,\gamma_z=0.04$)} 
        \label{fig:TimeHigherSubspaces}
    \end{figure}

With both $S$ and $R$ in place, Eq.\ (\ref{eqn:fbar}) still holds. Does the approximation in Eq.\ (\ref{eq:SSR}) remain useful? Does encoding in higher excitation subspaces yield an improved fidelity?

In the noise-free case, so long as the maximum transfer amplitude in the single excitation subspace is at least $\sqrt{2}-1$, the optimal encoding is in the single excitation subspace. By continuity, we expect this to remain true for weak noise as well.

We have not performed extensive studies of the multiple excitation subspaces, which are computationally far more demanding. However, we have performed simple tests on more modestly sized systems, and have never found an improvement by going to higher excitation subspaces. In Fig.\ \ref{fig:TimeHigherSubspaces}, we depict a typical case; a single noisy instance of a uniformly coupled chain. For reference, we display (blue triangles), the behaviour in the single excitation subspace. In solid black circles, we show an upper bound on the achievable fidelity based on separate optimisation of $\ket{\psi}$ for the matrices $R$ and $S$. In red squares, we show the fidelity achieved due to the approximation of Eq.\ (\ref{eq:SSR}), a lower bound of what can be achieved. In higher excitation subspaces, the coincidence of these two lines is clearly not as tight as it was in the first excitation subspace. For such small cases, we can exactly find the optimal solution (grey diamonds). In all such cases we have tried, it has always been close to the lower bound that is the best achievable value. 



    

\section{Conclusion}\label{Conclusion}

We have a presented a simple to implement scheme that can be applied to a wide range of Hamiltonians to improve fidelity in the presence of noise, by giving near-optimal encoding strategies. This is applicable to any Hamiltonian that is excitation-preserving. We have demonstrated that larger encoding and decoding regions lead to better transfer in the presence of noise although reducing the overall transfer distance. Even modest sizes of encoding region can convert scenarios that are impossible for transfer into reasonable propositions. This is most compelling in the case of the uniform chain, which suggests there is little value in considering other state transfer systems with more complex, harder to implement, coupling schemes.

We have also shown that it is of benefit to choose a Hamiltonian that allows faster transfer over one that (in the absence of noise) produces higher fidelity transfer. Our scheme leads to further improvement in these faster transfer chains. This technique can be applied to NISQ devices to allow some improvement in state transfer fidelity with respect to noise without implementing a full error correction scheme. 

We have explicitly considered a specific form of Hamiltonian based on a spin chain. However, our derivation only assumed an excitation-preserving Hamiltonian, and does not depend on any underlying coupling geometry. Similarly, a broad class of noise models can be handled. We have primarily focused on the single excitation subspace, but have provided a route via which the formalism can continue to higher excitation encodings, although numerics have failed to find any gain from doing so.

Here, we focused on state transfer for the sake of having a concrete task to talk about. However, the formalism could easily be adapted to other Hamiltonian-based tasks \cite{kay2017c,kay2017a}, particularly those whose success is measured by fidelity. One might also be able to extend this work to cover other Hamiltonian models such as those that have different subspace structures \cite{kay2007,difranco2008} including those from the Jordan-Wigner transformation.

\bibliography{../../../References}

\begin{thebibliography}{29}%
\makeatletter
\providecommand \@ifxundefined [1]{%
 \@ifx{#1\undefined}
}%
\providecommand \@ifnum [1]{%
 \ifnum #1\expandafter \@firstoftwo
 \else \expandafter \@secondoftwo
 \fi
}%
\providecommand \@ifx [1]{%
 \ifx #1\expandafter \@firstoftwo
 \else \expandafter \@secondoftwo
 \fi
}%
\providecommand \natexlab [1]{#1}%
\providecommand \enquote  [1]{``#1''}%
\providecommand \bibnamefont  [1]{#1}%
\providecommand \bibfnamefont [1]{#1}%
\providecommand \citenamefont [1]{#1}%
\providecommand \href@noop [0]{\@secondoftwo}%
\providecommand \href [0]{\begingroup \@sanitize@url \@href}%
\providecommand \@href[1]{\@@startlink{#1}\@@href}%
\providecommand \@@href[1]{\endgroup#1\@@endlink}%
\providecommand \@sanitize@url [0]{\catcode `\\12\catcode `\$12\catcode
  `\&12\catcode `\#12\catcode `\^12\catcode `\_12\catcode `\%12\relax}%
\providecommand \@@startlink[1]{}%
\providecommand \@@endlink[0]{}%
\providecommand \url  [0]{\begingroup\@sanitize@url \@url }%
\providecommand \@url [1]{\endgroup\@href {#1}{\urlprefix }}%
\providecommand \urlprefix  [0]{URL }%
\providecommand \Eprint [0]{\href }%
\providecommand \doibase [0]{https://doi.org/}%
\providecommand \selectlanguage [0]{\@gobble}%
\providecommand \bibinfo  [0]{\@secondoftwo}%
\providecommand \bibfield  [0]{\@secondoftwo}%
\providecommand \translation [1]{[#1]}%
\providecommand \BibitemOpen [0]{}%
\providecommand \bibitemStop [0]{}%
\providecommand \bibitemNoStop [0]{.\EOS\space}%
\providecommand \EOS [0]{\spacefactor3000\relax}%
\providecommand \BibitemShut  [1]{\csname bibitem#1\endcsname}%
\let\auto@bib@innerbib\@empty
\bibitem [{\citenamefont {Shor}(1997)}]{shor1997}%
  \BibitemOpen
  \bibfield  {author} {\bibinfo {author} {\bibfnamefont {P.~W.}\ \bibnamefont
  {Shor}},\ }\bibfield  {title} {\bibinfo {title} {Polynomial-{{Time
  Algorithms}} for {{Prime Factorization}} and {{Discrete Logarithms}} on a
  {{Quantum Computer}}},\ }\href {https://doi.org/10.1137/S0097539795293172}
  {\bibfield  {journal} {\bibinfo  {journal} {SIAM J. Comput.}\ }\textbf
  {\bibinfo {volume} {26}},\ \bibinfo {pages} {1484} (\bibinfo {year}
  {1997})}\BibitemShut {NoStop}%
\bibitem [{\citenamefont {Preskill}(2018)}]{preskill2018}%
  \BibitemOpen
  \bibfield  {author} {\bibinfo {author} {\bibfnamefont {J.}~\bibnamefont
  {Preskill}},\ }\bibfield  {title} {\bibinfo {title} {Quantum {{Computing}} in
  the {{NISQ}} era and beyond},\ }\href
  {https://doi.org/10.22331/q-2018-08-06-79} {\bibfield  {journal} {\bibinfo
  {journal} {Quantum}\ }\textbf {\bibinfo {volume} {2}},\ \bibinfo {pages} {79}
  (\bibinfo {year} {2018})}\BibitemShut {NoStop}%
\bibitem [{\citenamefont {Boixo}\ \emph {et~al.}(2018)\citenamefont {Boixo},
  \citenamefont {Isakov}, \citenamefont {Smelyanskiy}, \citenamefont {Babbush},
  \citenamefont {Ding}, \citenamefont {Jiang}, \citenamefont {Bremner},
  \citenamefont {Martinis},\ and\ \citenamefont {Neven}}]{boixo2018}%
  \BibitemOpen
  \bibfield  {author} {\bibinfo {author} {\bibfnamefont {S.}~\bibnamefont
  {Boixo}}, \bibinfo {author} {\bibfnamefont {S.~V.}\ \bibnamefont {Isakov}},
  \bibinfo {author} {\bibfnamefont {V.~N.}\ \bibnamefont {Smelyanskiy}},
  \bibinfo {author} {\bibfnamefont {R.}~\bibnamefont {Babbush}}, \bibinfo
  {author} {\bibfnamefont {N.}~\bibnamefont {Ding}}, \bibinfo {author}
  {\bibfnamefont {Z.}~\bibnamefont {Jiang}}, \bibinfo {author} {\bibfnamefont
  {M.~J.}\ \bibnamefont {Bremner}}, \bibinfo {author} {\bibfnamefont {J.~M.}\
  \bibnamefont {Martinis}},\ and\ \bibinfo {author} {\bibfnamefont
  {H.}~\bibnamefont {Neven}},\ }\bibfield  {title} {\bibinfo {title}
  {Characterizing {{Quantum Supremacy}} in {{Near}}-{{Term Devices}}},\ }\href
  {https://doi.org/10.1038/s41567-018-0124-x} {\bibfield  {journal} {\bibinfo
  {journal} {Nature Phys}\ }\textbf {\bibinfo {volume} {14}},\ \bibinfo {pages}
  {595} (\bibinfo {year} {2018})}\BibitemShut {NoStop}%
\bibitem [{\citenamefont {Bose}(2003)}]{bose2003}%
  \BibitemOpen
  \bibfield  {author} {\bibinfo {author} {\bibfnamefont {S.}~\bibnamefont
  {Bose}},\ }\bibfield  {title} {\bibinfo {title} {Quantum {{Communication}}
  through an {{Unmodulated Spin Chain}}},\ }\href
  {https://doi.org/10.1103/PhysRevLett.91.207901} {\bibfield  {journal}
  {\bibinfo  {journal} {Phys. Rev. Lett.}\ }\textbf {\bibinfo {volume} {91}},\
  \bibinfo {pages} {207901} (\bibinfo {year} {2003})}\BibitemShut {NoStop}%
\bibitem [{\citenamefont {Christandl}\ \emph {et~al.}(2004)\citenamefont
  {Christandl}, \citenamefont {Datta}, \citenamefont {Ekert},\ and\
  \citenamefont {Landahl}}]{christandl2004}%
  \BibitemOpen
  \bibfield  {author} {\bibinfo {author} {\bibfnamefont {M.}~\bibnamefont
  {Christandl}}, \bibinfo {author} {\bibfnamefont {N.}~\bibnamefont {Datta}},
  \bibinfo {author} {\bibfnamefont {A.}~\bibnamefont {Ekert}},\ and\ \bibinfo
  {author} {\bibfnamefont {A.~J.}\ \bibnamefont {Landahl}},\ }\bibfield
  {title} {\bibinfo {title} {Perfect {{State Transfer}} in {{Quantum Spin
  Networks}}},\ }\href {https://doi.org/10.1103/PhysRevLett.92.187902}
  {\bibfield  {journal} {\bibinfo  {journal} {Phys. Rev. Lett.}\ }\textbf
  {\bibinfo {volume} {92}},\ \bibinfo {pages} {187902} (\bibinfo {year}
  {2004})}\BibitemShut {NoStop}%
\bibitem [{\citenamefont {Bose}(2007)}]{bose2007}%
  \BibitemOpen
  \bibfield  {author} {\bibinfo {author} {\bibfnamefont {S.}~\bibnamefont
  {Bose}},\ }\bibfield  {title} {\bibinfo {title} {Quantum communication
  through spin chain dynamics: An introductory overview},\ }\href
  {https://doi.org/10.1080/00107510701342313} {\bibfield  {journal} {\bibinfo
  {journal} {Contemp. Phys.}\ }\textbf {\bibinfo {volume} {48}},\ \bibinfo
  {pages} {13} (\bibinfo {year} {2007})}\BibitemShut {NoStop}%
\bibitem [{\citenamefont {Kay}(2010)}]{kay2010a}%
  \BibitemOpen
  \bibfield  {author} {\bibinfo {author} {\bibfnamefont {A.}~\bibnamefont
  {Kay}},\ }\bibfield  {title} {\bibinfo {title} {A {{Review}} of {{Perfect
  State Transfer}} and its {{Application}} as a {{Constructive Tool}}},\ }\href
  {https://doi.org/10.1142/S0219749910006514} {\bibfield  {journal} {\bibinfo
  {journal} {Int. J. Quantum Inform.}\ }\textbf {\bibinfo {volume} {8}},\
  \bibinfo {pages} {641} (\bibinfo {year} {2010})}\BibitemShut {NoStop}%
\bibitem [{\citenamefont {Yung}(2006)}]{yung2006}%
  \BibitemOpen
  \bibfield  {author} {\bibinfo {author} {\bibfnamefont {M.-H.}\ \bibnamefont
  {Yung}},\ }\bibfield  {title} {\bibinfo {title} {Quantum speed limit for
  perfect state transfer in one dimension},\ }\href
  {https://doi.org/10.1103/PhysRevA.74.030303} {\bibfield  {journal} {\bibinfo
  {journal} {Phys. Rev. A}\ }\textbf {\bibinfo {volume} {74}},\ \bibinfo
  {pages} {030303} (\bibinfo {year} {2006})}\BibitemShut {NoStop}%
\bibitem [{\citenamefont {Apollaro}\ \emph {et~al.}(2012)\citenamefont
  {Apollaro}, \citenamefont {Banchi}, \citenamefont {Cuccoli}, \citenamefont
  {Vaia},\ and\ \citenamefont {Verrucchi}}]{apollaro2012}%
  \BibitemOpen
  \bibfield  {author} {\bibinfo {author} {\bibfnamefont {T.~J.~G.}\
  \bibnamefont {Apollaro}}, \bibinfo {author} {\bibfnamefont {L.}~\bibnamefont
  {Banchi}}, \bibinfo {author} {\bibfnamefont {A.}~\bibnamefont {Cuccoli}},
  \bibinfo {author} {\bibfnamefont {R.}~\bibnamefont {Vaia}},\ and\ \bibinfo
  {author} {\bibfnamefont {P.}~\bibnamefont {Verrucchi}},\ }\bibfield  {title}
  {\bibinfo {title} {99\%-fidelity ballistic quantum-state transfer through
  long uniform channels},\ }\href {https://doi.org/10.1103/PhysRevA.85.052319}
  {\bibfield  {journal} {\bibinfo  {journal} {Phys. Rev. A}\ }\textbf {\bibinfo
  {volume} {85}},\ \bibinfo {pages} {052319} (\bibinfo {year}
  {2012})}\BibitemShut {NoStop}%
\bibitem [{\citenamefont {Osborne}\ and\ \citenamefont
  {Linden}(2004)}]{osborne2004}%
  \BibitemOpen
  \bibfield  {author} {\bibinfo {author} {\bibfnamefont {T.~J.}\ \bibnamefont
  {Osborne}}\ and\ \bibinfo {author} {\bibfnamefont {N.}~\bibnamefont
  {Linden}},\ }\bibfield  {title} {\bibinfo {title} {Propagation of quantum
  information through a spin system},\ }\href
  {https://doi.org/10.1103/PhysRevA.69.052315} {\bibfield  {journal} {\bibinfo
  {journal} {Phys. Rev. A}\ }\textbf {\bibinfo {volume} {69}},\ \bibinfo
  {pages} {052315} (\bibinfo {year} {2004})}\BibitemShut {NoStop}%
\bibitem [{\citenamefont {Haselgrove}(2005)}]{haselgrove2005}%
  \BibitemOpen
  \bibfield  {author} {\bibinfo {author} {\bibfnamefont {H.~L.}\ \bibnamefont
  {Haselgrove}},\ }\bibfield  {title} {\bibinfo {title} {Optimal state encoding
  for quantum walks and quantum communication over spin systems},\ }\href
  {https://doi.org/10.1103/PhysRevA.72.062326} {\bibfield  {journal} {\bibinfo
  {journal} {Phys. Rev. A}\ }\textbf {\bibinfo {volume} {72}},\ \bibinfo
  {pages} {062326} (\bibinfo {year} {2005})}\BibitemShut {NoStop}%
\bibitem [{\citenamefont {Keele}\ and\ \citenamefont {Kay}(2021)}]{keele2021}%
  \BibitemOpen
  \bibfield  {author} {\bibinfo {author} {\bibfnamefont {C.}~\bibnamefont
  {Keele}}\ and\ \bibinfo {author} {\bibfnamefont {A.}~\bibnamefont {Kay}},\
  }\bibfield  {title} {\bibinfo {title} {Combatting the {{Effects}} of
  {{Disorder}} in {{Quantum State Transfer}}},\ }\href@noop {} {\bibfield
  {journal} {\bibinfo  {journal} {arXiv}\ } (\bibinfo {year} {2021})},\ \Eprint
  {https://arxiv.org/abs/2111.11695} {arXiv:2111.11695} \BibitemShut {NoStop}%
\bibitem [{\citenamefont {Marais}\ \emph {et~al.}(2013)\citenamefont {Marais},
  \citenamefont {Sinayskiy}, \citenamefont {Kay}, \citenamefont {Petruccione},\
  and\ \citenamefont {Ekert}}]{marais2013}%
  \BibitemOpen
  \bibfield  {author} {\bibinfo {author} {\bibfnamefont {A.}~\bibnamefont
  {Marais}}, \bibinfo {author} {\bibfnamefont {I.}~\bibnamefont {Sinayskiy}},
  \bibinfo {author} {\bibfnamefont {A.}~\bibnamefont {Kay}}, \bibinfo {author}
  {\bibfnamefont {F.}~\bibnamefont {Petruccione}},\ and\ \bibinfo {author}
  {\bibfnamefont {A.}~\bibnamefont {Ekert}},\ }\bibfield  {title} {\bibinfo
  {title} {Decoherence-assisted transport in quantum networks},\ }\href
  {https://doi.org/10.1088/1367-2630/15/1/013038} {\bibfield  {journal}
  {\bibinfo  {journal} {New J. Phys.}\ }\textbf {\bibinfo {volume} {15}},\
  \bibinfo {pages} {013038} (\bibinfo {year} {2013})}\BibitemShut {NoStop}%
\bibitem [{\citenamefont {Behzadi}\ \emph {et~al.}(2018)\citenamefont
  {Behzadi}, \citenamefont {Ektesabi},\ and\ \citenamefont
  {Ahansaz}}]{behzadi2018}%
  \BibitemOpen
  \bibfield  {author} {\bibinfo {author} {\bibfnamefont {N.}~\bibnamefont
  {Behzadi}}, \bibinfo {author} {\bibfnamefont {A.}~\bibnamefont {Ektesabi}},\
  and\ \bibinfo {author} {\bibfnamefont {B.}~\bibnamefont {Ahansaz}},\
  }\bibfield  {title} {\bibinfo {title} {Well-protected quantum state transfer
  in a dissipative spin chain},\ }\href
  {https://doi.org/10.1038/s41598-018-26220-y} {\bibfield  {journal} {\bibinfo
  {journal} {Sci. Rep.}\ }\textbf {\bibinfo {volume} {8}},\ \bibinfo {pages}
  {7906} (\bibinfo {year} {2018})}\BibitemShut {NoStop}%
\bibitem [{\citenamefont {Burgarth}\ and\ \citenamefont
  {Bose}(2006)}]{burgarth2006a}%
  \BibitemOpen
  \bibfield  {author} {\bibinfo {author} {\bibfnamefont {D.}~\bibnamefont
  {Burgarth}}\ and\ \bibinfo {author} {\bibfnamefont {S.}~\bibnamefont
  {Bose}},\ }\bibfield  {title} {\bibinfo {title} {Universal destabilization
  and slowing of spin-transfer functions by a bath of spins},\ }\href
  {https://doi.org/10.1103/PhysRevA.73.062321} {\bibfield  {journal} {\bibinfo
  {journal} {Phys. Rev. A}\ }\textbf {\bibinfo {volume} {73}},\ \bibinfo
  {pages} {062321} (\bibinfo {year} {2006})}\BibitemShut {NoStop}%
\bibitem [{\citenamefont {Marletto}\ \emph {et~al.}(2012)\citenamefont
  {Marletto}, \citenamefont {Kay},\ and\ \citenamefont {Ekert}}]{marletto2012}%
  \BibitemOpen
  \bibfield  {author} {\bibinfo {author} {\bibfnamefont {C.}~\bibnamefont
  {Marletto}}, \bibinfo {author} {\bibfnamefont {A.}~\bibnamefont {Kay}},\ and\
  \bibinfo {author} {\bibfnamefont {A.}~\bibnamefont {Ekert}},\ }\bibfield
  {title} {\bibinfo {title} {How to {{Counteract Systematic Errors}} in
  {{Quantum State Transfer}}},\ }\href@noop {} {\bibfield  {journal} {\bibinfo
  {journal} {Quantum Inf. Comput.}\ }\textbf {\bibinfo {volume} {12}},\
  \bibinfo {pages} {648} (\bibinfo {year} {2012})}\BibitemShut {NoStop}%
\bibitem [{\citenamefont {Burgarth}\ and\ \citenamefont
  {Bose}(2005)}]{burgarth2005}%
  \BibitemOpen
  \bibfield  {author} {\bibinfo {author} {\bibfnamefont {D.}~\bibnamefont
  {Burgarth}}\ and\ \bibinfo {author} {\bibfnamefont {S.}~\bibnamefont
  {Bose}},\ }\bibfield  {title} {\bibinfo {title} {Perfect quantum state
  transfer with randomly coupled quantum chains},\ }\href
  {https://doi.org/10.1088/1367-2630/7/1/135} {\bibfield  {journal} {\bibinfo
  {journal} {New J. Phys.}\ }\textbf {\bibinfo {volume} {7}},\ \bibinfo {pages}
  {135} (\bibinfo {year} {2005})}\BibitemShut {NoStop}%
\bibitem [{\citenamefont {Burgarth}\ \emph {et~al.}(2005)\citenamefont
  {Burgarth}, \citenamefont {Giovannetti},\ and\ \citenamefont
  {Bose}}]{burgarth2005a}%
  \BibitemOpen
  \bibfield  {author} {\bibinfo {author} {\bibfnamefont {D.}~\bibnamefont
  {Burgarth}}, \bibinfo {author} {\bibfnamefont {V.}~\bibnamefont
  {Giovannetti}},\ and\ \bibinfo {author} {\bibfnamefont {S.}~\bibnamefont
  {Bose}},\ }\bibfield  {title} {\bibinfo {title} {Efficient and perfect state
  transfer in quantum chains},\ }\href
  {https://doi.org/10.1088/0305-4470/38/30/013} {\bibfield  {journal} {\bibinfo
   {journal} {J. Phys. A: Math. Gen.}\ }\textbf {\bibinfo {volume} {38}},\
  \bibinfo {pages} {6793} (\bibinfo {year} {2005})}\BibitemShut {NoStop}%
\bibitem [{\citenamefont {Kay}(2018{\natexlab{a}})}]{kay2017d}%
  \BibitemOpen
  \bibfield  {author} {\bibinfo {author} {\bibfnamefont {A.}~\bibnamefont
  {Kay}},\ }\bibfield  {title} {\bibinfo {title} {Perfect coding for dephased
  quantum state transfer},\ }\href {https://doi.org/10.1103/PhysRevA.97.032317}
  {\bibfield  {journal} {\bibinfo  {journal} {Phys. Rev. A}\ }\textbf {\bibinfo
  {volume} {97}},\ \bibinfo {pages} {032317} (\bibinfo {year}
  {2018}{\natexlab{a}})}\BibitemShut {NoStop}%
\bibitem [{\citenamefont {Kay}(2016{\natexlab{a}})}]{kay2016c}%
  \BibitemOpen
  \bibfield  {author} {\bibinfo {author} {\bibfnamefont {A.}~\bibnamefont
  {Kay}},\ }\bibfield  {title} {\bibinfo {title} {Quantum error correction for
  state transfer in noisy spin chains},\ }\href
  {https://doi.org/10.1103/PhysRevA.93.042320} {\bibfield  {journal} {\bibinfo
  {journal} {Phys. Rev. A}\ }\textbf {\bibinfo {volume} {93}},\ \bibinfo
  {pages} {042320} (\bibinfo {year} {2016}{\natexlab{a}})}\BibitemShut
  {NoStop}%
\bibitem [{\citenamefont {Gilchrist}\ \emph {et~al.}(2011)\citenamefont
  {Gilchrist}, \citenamefont {Terno},\ and\ \citenamefont
  {Wood}}]{gilchrist2011}%
  \BibitemOpen
  \bibfield  {author} {\bibinfo {author} {\bibfnamefont {A.}~\bibnamefont
  {Gilchrist}}, \bibinfo {author} {\bibfnamefont {D.~R.}\ \bibnamefont
  {Terno}},\ and\ \bibinfo {author} {\bibfnamefont {C.~J.}\ \bibnamefont
  {Wood}},\ }\bibfield  {title} {\bibinfo {title} {Vectorization of quantum
  operations and its use},\ }\href@noop {} {\bibfield  {journal} {\bibinfo
  {journal} {arXiv}\ } (\bibinfo {year} {2011})},\ \Eprint
  {https://arxiv.org/abs/0911.2539} {arXiv:0911.2539} \BibitemShut {NoStop}%
\bibitem [{Note1()}]{Note1}%
  \BibitemOpen
  \bibinfo {note} {Ultimately this choice makes no difference until we look at
  higher excitation subspaces, but is included to demonstrate a different
  assumption.}\BibitemShut {Stop}%
\bibitem [{\citenamefont {Rost}\ \emph {et~al.}(2020)\citenamefont {Rost},
  \citenamefont {Jones}, \citenamefont {Vyushkova}, \citenamefont {Ali},
  \citenamefont {Cullip}, \citenamefont {Vyushkov},\ and\ \citenamefont
  {Nabrzyski}}]{rost2020}%
  \BibitemOpen
  \bibfield  {author} {\bibinfo {author} {\bibfnamefont {B.}~\bibnamefont
  {Rost}}, \bibinfo {author} {\bibfnamefont {B.}~\bibnamefont {Jones}},
  \bibinfo {author} {\bibfnamefont {M.}~\bibnamefont {Vyushkova}}, \bibinfo
  {author} {\bibfnamefont {A.}~\bibnamefont {Ali}}, \bibinfo {author}
  {\bibfnamefont {C.}~\bibnamefont {Cullip}}, \bibinfo {author} {\bibfnamefont
  {A.}~\bibnamefont {Vyushkov}},\ and\ \bibinfo {author} {\bibfnamefont
  {J.}~\bibnamefont {Nabrzyski}},\ }\bibfield  {title} {\bibinfo {title}
  {Simulation of {{Thermal Relaxation}} in {{Spin Chemistry Systems}} on a
  {{Quantum Computer Using Inherent Qubit Decoherence}}},\ }\href@noop {}
  {\bibfield  {journal} {\bibinfo  {journal} {arXiv}\ } (\bibinfo {year}
  {2020})},\ \Eprint {https://arxiv.org/abs/2001.00794} {arXiv:2001.00794}
  \BibitemShut {NoStop}%
\bibitem [{\citenamefont {Kay}(2016{\natexlab{b}})}]{kay2016b}%
  \BibitemOpen
  \bibfield  {author} {\bibinfo {author} {\bibfnamefont {A.}~\bibnamefont
  {Kay}},\ }\bibfield  {title} {\bibinfo {title} {A {{Note}} on the {{Speed}}
  of {{Perfect State Transfer}}},\ }\href@noop {} {\bibfield  {journal}
  {\bibinfo  {journal} {arXiv}\ } (\bibinfo {year} {2016}{\natexlab{b}})},\
  \Eprint {https://arxiv.org/abs/1609.01854} {arXiv:1609.01854} \BibitemShut
  {NoStop}%
\bibitem [{Note2()}]{Note2}%
  \BibitemOpen
  \bibinfo {note} {In \cite {keele2021}, we decoded onto a separate qubit,
  which removes this constraint.}\BibitemShut {Stop}%
\bibitem [{\citenamefont {Kay}(2018{\natexlab{b}})}]{kay2017c}%
  \BibitemOpen
  \bibfield  {author} {\bibinfo {author} {\bibfnamefont {A.}~\bibnamefont
  {Kay}},\ }\bibfield  {title} {\bibinfo {title} {Coprocessors for quantum
  devices},\ }\href {https://doi.org/10.1103/PhysRevA.97.032316} {\bibfield
  {journal} {\bibinfo  {journal} {Phys. Rev. A}\ }\textbf {\bibinfo {volume}
  {97}},\ \bibinfo {pages} {032316} (\bibinfo {year}
  {2018}{\natexlab{b}})}\BibitemShut {NoStop}%
\bibitem [{\citenamefont {Kay}(2017)}]{kay2017a}%
  \BibitemOpen
  \bibfield  {author} {\bibinfo {author} {\bibfnamefont {A.}~\bibnamefont
  {Kay}},\ }\bibfield  {title} {\bibinfo {title} {Tailoring spin chain dynamics
  for fractional revivals},\ }\href {https://doi.org/10.22331/q-2017-08-10-24}
  {\bibfield  {journal} {\bibinfo  {journal} {Quantum}\ }\textbf {\bibinfo
  {volume} {1}},\ \bibinfo {pages} {24} (\bibinfo {year} {2017})}\BibitemShut
  {NoStop}%
\bibitem [{\citenamefont {Kay}(2007)}]{kay2007}%
  \BibitemOpen
  \bibfield  {author} {\bibinfo {author} {\bibfnamefont {A.}~\bibnamefont
  {Kay}},\ }\bibfield  {title} {\bibinfo {title} {Unifying {{Quantum State
  Transfer}} and {{State Amplification}}},\ }\href
  {https://doi.org/10.1103/PhysRevLett.98.010501} {\bibfield  {journal}
  {\bibinfo  {journal} {Phys. Rev. Lett.}\ }\textbf {\bibinfo {volume} {98}},\
  \bibinfo {pages} {010501} (\bibinfo {year} {2007})}\BibitemShut {NoStop}%
\bibitem [{\citenamefont {Di~Franco}\ \emph {et~al.}(2008)\citenamefont
  {Di~Franco}, \citenamefont {Paternostro},\ and\ \citenamefont
  {Kim}}]{difranco2008}%
  \BibitemOpen
  \bibfield  {author} {\bibinfo {author} {\bibfnamefont {C.}~\bibnamefont
  {Di~Franco}}, \bibinfo {author} {\bibfnamefont {M.}~\bibnamefont
  {Paternostro}},\ and\ \bibinfo {author} {\bibfnamefont {M.~S.}\ \bibnamefont
  {Kim}},\ }\bibfield  {title} {\bibinfo {title} {Perfect {{State Transfer}} on
  a {{Spin Chain}} without {{State Initialization}}},\ }\href
  {https://doi.org/10.1103/PhysRevLett.101.230502} {\bibfield  {journal}
  {\bibinfo  {journal} {Phys. Rev. Lett.}\ }\textbf {\bibinfo {volume} {101}},\
  \bibinfo {pages} {230502} (\bibinfo {year} {2008})}\BibitemShut {NoStop}%
\end{thebibliography}%
\end{document}